\newcommand{\var}[1]{\text{\lstinline+#1+}}
\newtheorem{definition}{Definition}
      \def\cG{\ensuremath{{\mathcal G}}}
\newcommand{\set}[1]{\left\{ #1 \right\}}
\newcommand{\diam}{\ensuremath{\mathit{diam}}}
\newcommand{\figlabel}[1]{\label{figure:#1}}
\newcommand{\nakedfigref}[1]{\ref{figure:#1}}
\newcommand{\figref}[1]{Figure~\nakedfigref{#1}}
\newcommand{\eqnlabel}[1]{\label{eq:#1}}
\newcommand{\nakedeqnref}[1]{\ref{eq:#1}}
\newcommand{\eqnref}[1]{Equation~\nakedeqnref{#1}}
\newcommand{\lemmalabel}[1]{\label{lemma:#1}}
\newcommand{\nakedlemmaref}[1]{\ref{lemma:#1}}
\newcommand{\lemmaref}[1]{Lemma~\nakedlemmaref{#1}}
\newcommand{\lemmarefrange}[2]{Lemmas~\nakedlemmaref{#1}-\nakedlemmaref{#2}}
\newcommand{\seclabel}[1]{\label{sec:#1}}
\newcommand{\nakedsecref}[1]{\ref{sec:#1}}
\newcommand{\secref}[1]{Section~\nakedsecref{#1}}
\begin{document}
\fancyhead{}

\title{Hedging Against Sore Loser Attacks in Cross-Chain Transactions}

% you can include author information in the source, but `anonymous' option will hide it

\author{Yingjie Xue}
\affiliation{%
  \institution{Brown University}
  \city{Providence}
   \state{RI}
  \country{USA}}
\email{yingjie_xue@brown.edu}

\author{Maurice Herlihy}
\affiliation{%
  \institution{Brown University}
  \city{Providence}
   \state{RI}
  \country{USA}}
\email{maurice.herlihy@gmail.com}

\begin{abstract}
A \emph{sore loser} attack in cross-blockchain commerce rises when one party decides to halt participation partway through, leaving other parties' assets locked up for a long duration. Although vulnerability to sore loser attacks cannot be entirely eliminated, it can be reduced to an arbitrarily low level. This paper proposes new distributed protocols for hedging a range of cross-chain transactions in a synchronous communication model,  such as two-party swaps, $n$-party swaps, brokered transactions, and auctions. 
\end{abstract}

% keywords, ACM classification and conference information can be omitted for submission

\begin{CCSXML}
<ccs2012>
<concept>
<concept_id>10003752.10003809.10010172</concept_id>
<concept_desc>Theory of computation~Distributed algorithms</concept_desc>
<concept_significance>500</concept_significance>
</concept>
</ccs2012>
\end{CCSXML}

\ccsdesc[500]{Theory of computation~Distributed algorithms}
\keywords{blockchain; cross-chain transactions}

\maketitle

\section{Introduction}
Alice is heavily invested in ``apricot tokens'',
an electronic asset managed on the ``apricot blockchain'',
a tamper-proof replicated ledger.
Token prices are volatile, so she decides to diversify.
She locates Bob, who owns "banana tokens",
managed on a distinct ``banana blockchain'',
and Alice and Bob agree to swap some of her tokens for some of his.

Alice and Bob do not trust one another,
nor do they both trust any third party,
so they need a way to exchange their tokens
in a safe and decentralized way.
Fortunately, they can call upon well-known
\emph{cross-chain atomic swap}
protocols~\cite{bitcoinwiki,bip199,decred,arwen2019,Herlihy2018,tiernolan,ZakharyAE2019}
that ensure that neither party can steal the other's assets.

The notion of \emph{escrow} is central to most protocols for
cross-chain exchanges.
An escrow is like a lock in a concurrent data structure:
escrowing an asset ensures that it can take part in only one exchange at a
time.
Typically an asset is escrowed by temporarily transferring ownership
to an automaton (called a ``smart contract'') programmed to award that
asset to the counterparty when certain conditions are met.
If those conditions are not met within a reasonable duration,
the escrow contract \emph{times out} and refunds the asset to its original owner.

In a typical atomic swap protocol,
Alice might put her tokens in escrow for, say, 48 hours,
then Bob would put his tokens in escrow for, say, 24 hours.
Alice would then claim Bob's tokens (details vary), and Bob would claim Alice's.
Atomic swap protocols ensure \emph{liveness}:
if both parties conform to the protocol, the swap takes place,
and even if parties deviate from the protocol,
timeouts ensure that no assets are locked up forever.
These protocols also ensure \emph{safety}:
a conforming party's assets cannot be stolen.

Nevertheless, most prior protocols have a critical flaw:
the parties are vulnerable, at different times,
to \emph{sore loser} attacks
(sometimes called \emph{lockup griefing}~\cite{arwen2019}).
Informally, this problem arises when mutually-untrusting
parties agree to a sequence of asset transfers in a volatile
market where asset values may fluctuate.
While the transfers are in progress, incentives may change:
a sudden decrease in an asset's value may motivate a party to abandon a swap midway,
or an unsuccessful bidder may drop out early from an auction.
A sore loser attack is roughly analogous to having a thread 
(deliberately) halt while holding a lock.

In the atomic swap example,
once Alice has escrowed her tokens,
Bob has the following \emph{option}
\!\footnote{In finance, this choice is called an "American call option".}:
if he observes that Alice's tokens have diminished in value,
then he stands to lose from the swap,
so he simply abandons the protocol,
leaving Alice unable to trade her tokens for 48 hours.
If, instead, Bob does respond by escrowing his own tokens,
then the balance of power reverses.
If Alice now observes that Bob's tokens have diminished in value,
then she might abandon the protocol,
leaving Bob unable to trade his tokens for 24 hours.
The sore loser attack thus introduces perverse incentives:
if either asset diminishes significantly in relative value to the other,
then one party has an incentive to quit at the other's expense.
If asset values are volatile,
parties may even have an incentive to run the protocol as slowly as possible to keep their options open for as long as possible.

The contribution of this paper is to describe novel ways
to transform various cross-chain protocols 
to mitigate or eliminate sore loser attacks.
We consider two-party atomic swaps,
multi-party atomic swaps,
brokered commerce, and simple auctions. Since the cross-chain protocols that we transform already assume synchronous communication, our mitigation mechanisms are also based on a synchronous communication model.

In classical finance,
sore loser attacks are prevented by having the option buyer
(the party who might renege)
pay a fee, called a \emph{premium},
to compensate the option seller
(the party whose assets will be locked up)
if the buyer abandons the protocol.\footnote{In some variations, the buyer pays the premium no matter what.}
Our goal here is to do the same for cross-chain commerce,
ensuring that an honest party is compensated if its assets are locked
up through no fault of its own. \footnote{The aim is to compensate honest parties, not primarily to deter or punish dishonest parties.}

Setting up a cross-blockchain premium structure presents challenges
that do not arise in classical finance.
Indeed, completely eliminating sore loser attacks seems impossible
in a distributed context,
because as soon as one party escrows an asset for the first time,
the counterparties might all renege,
leaving that party with no possibility of compensation.
If we cannot completely eliminate risk, we can still make it arbitrarily small.
We will add a premium distribution phase to protect high-value escrows from sore loser attacks.
This new phase is itself vulnerable to sore loser attacks,
but the values at risk are considerably smaller than the values
at risk in the main protocol.
For example,
Alice may be unwilling to accept a risk that 100 of her tokens could be
locked up for 48 hours,
but she may be willing to accept that risk for 1 token.

\section{Related Work}
In finance,
\emph{optionality}~\cite{higham2004introduction} is the notion that
there is value in acquiring the right,
without any obligation, to invest in something later. Atomic swap based on hashed timelock contracts \cite{tiernolan} exposes such optionality to both parties. 
However, multiple researchers~\cite{han2019optionality,arwen2019,liu2018,ZmnSCPxj} have observed that both parties are exposed to sore loser attacks where the counterparty reneges at critical points in the protocol.  Robinson~\cite{danrobinson} proposes to reduce vulnerability to sore loser attacks by splitting each swap into a sequence of very small swaps,
an approach that works only for fungible, divisible tokens.

There are prior two-party swap protocols that are \emph{asymmetric},
meaning that one party pays a premium to the other,
but not vice versa,
protecting only one side of the swap from a sore loser attack.
These protocols include Han \emph{et al.}~\cite{han2019optionality},
Eizinger \emph{et al.}~\cite{eizinger},
Liu~\cite{liu2018},
the Komodo platform~\cite{komodo},
and the Arwen protocols~\cite{arwen2019}.
Eizinger et al. \cite{eizinger} address the optionality problem by a premium mechanism,
however they address only Alice's optionality and neglect Bob's,
allowing Bob to renege after Alice escrows her assets.
Han \emph{et al.} \cite{han2019optionality} quantified optionality unfairness,
and have Alice pay premiums to Bob if she deviates from the protocol,
but not vice-versa.
In the Arwen protocols~\cite{arwen2019},
one party to each swap is a centralized exchange,
which is assumed to be trustworthy because it wants to protect its reputation.
Komodo \cite{komodo} mitigates sore loser attacks by incentives.
For example, Alice pays a small fee if she is caught deviating,
but that fee is not used to compensate Bob.

Xu \emph{et al.} \cite{xu2020game} use game-theoretic techniques to analyze the
success rate of cross-chain swaps using hashed timelock contracts,
showing that both parties can rationally choose not to follow the protocol.
To the best of our knowledge,
Liu~\cite{liu2018} was the first to propose an atomic swap protocol
that protects both parties from sore loser attacks.
This protocol is still asymmetric in the sense that Alice explicitly
purchases an option from Bob, and her premium is not refunded.
There is no obvious way to extend this protocol to more than two
parties, or to applications such as brokered sales or auctions.
Tefagh \emph{et al.}~\cite{tefaghcapital} propose a similar protocol
based on an options model.

\section{Model}
Although we will propose protocols based on today's blockchains and smart contracts, 
none of our principal results depends on specific
blockchain technology, or even blockchains as such.
Instead, we focus on computational abstractions central to any
systematic approach to commerce among untrusting parties,
no matter what technology underlies the shared data stores.

\subsection{Ledgers and Contracts}
A \emph{blockchain} is a tamper-proof distributed ledger (or database) that
tracks ownership of \emph{assets} by \emph{parties}.
A party can be a person, an organization, or even a contract (see below).
An asset can be a cryptocurrency, a token, an electronic deed to property, and so on.
There are multiple blockchains managing different kinds of assets.
We focus here on applications where mutually-untrusting parties trade assets among
themselves, possibly in complicated ways,
an activity sometimes called \emph{adversarial commerce}~\cite{HerlihyLS2019}.
Examples of adversarial commerce include swaps, loans, auctions, markets, and so on.

A \emph{smart contract} (or ``contract'') is a
blockchain-resident program initialized and called by the parties.
A party can publish a new contract on a blockchain,
or call a function exported by an existing contract.
Contract code and contract state are public,
so a party calling a contract knows what code will be executed.
Contract code must be deterministic because
contracts are typically re-executed multiple times by
mutually-suspicious parties.

A contract can read or write ledger entries on the blockchain where it resides,
but it cannot directly access data from the outside world,
and cannot call contracts on other blockchains.
A contract on blockchain $A$ can learn of a change to
a blockchain $B$ only if some party explicitly informs $A$ of $B$'s change, 
along with some kind of ``proof'' that the information about $B$'s state is correct.
In short,
contract code is passive, public, deterministic, and trusted,
while parties are active, autonomous, and potentially dishonest.

We assume a \emph{synchronous} execution model
where there is a known upper bound $\Delta$ on the propagation time
for one party's change to the blockchain state to be noticed by the
other parties.
Specifically, blockchains generate new blocks at a steady rate,
and valid transactions sent to the blockchain will be included in a block
and visible to participants within a known, bounded time $\Delta$. In practice, contracts measure $\Delta$ in terms of block height.

\subsection{Threat Model}\seclabel{threatmodel}
We do not consider attackers who compromise the blockchain itself,
through, for example, denial-of-service attacks.
Although parties may display Byzantine behavior,
smart contacts can enforce ordering\footnote{Smart contracts can record the order in which messages are received.}, timing, and well-formedness restrictions on
transactions that significantly limit the ways in which Byzantine
parties can misbehave.

We make standard cryptographic assumptions.
Each party has a public key and a private key,
and any party's public key is known to all.
Messages are signed so they cannot be forged,
and they include single-use labels (``nonces'')
so they cannot be replayed.

\section{Overview}
Cross-chain commerce is founded on the notion of \emph{escrow}.
An asset's owner does not directly transfer that asset to a counterparty.
Instead, the owner temporarily transfers that asset to an \emph{escrow contract}.
If certain conditions are met within a certain time,
that contract \emph{redeems} the asset,
transferring it to the \emph{counterparty},
and otherwise it \emph{refunds} that asset to the original owner.
The party and counterparty both trust the contract, 
even if they do not trust one another.

Multiple parties agree on a common \emph{protocol} to execute a series of transfers,
an agreement that can be monitored, but not enforced.
Instead of distinguishing between faulty and non-faulty parties,
as in classical distributed computing,
we distinguish only between \emph{compliant} parties
who follow the agreed-upon protocol,
and \emph{deviating} parties who do not.
We make no assumptions about the number of deviating parties.

Each section in this paper starts with a \emph{base protocol},
adapted from the literature,
that performs some form of adversarial commerce:
two-party swap, multi-party swaps, brokered sales, and simple auctions.
By hypothesis,
each such protocol guarantees that if all parties comply, all transfers take place,
that no asset is escrowed forever,
and that no compliant party ends up with a negative payoff
(e.g., has its assets stolen).

Nevertheless, these protocols are vulnerable to sore loser attacks:
at one or more points, one party can walk away leaving the other parties' assets
locked up in escrow for a long time.
Suppose Alice escrows an asset with value $v$
in a situation where her counterparty Bob can walk away.
A \emph{premium} is a value $p \ll v$ such that
(1) Alice considers $p$ large enough to be acceptable compensation for
locking up her asset for the duration of the protocol,
and (2) Bob considers $p$ small enough that he accepts the risk his premium could be locked up for the duration of the protocol.

We extend each base protocol to protect parties against sore loser attacks
by associating a premium with each escrow.
We modify each escrow contract to refund the premium if the asset is redeemed,
and to pay the premium to the counterparty if the asset is refunded.
The base protocol is prefaced by one or more \emph{premium distribution phases},
where premiums are deposited in escrow contracts.

These extended protocols have a semi-modular structure :
the \emph{premium protocol} vulnerable to (minor) sore loser attack
is superimposed on a \emph{base protocol} to protect it 
from (major) sore loser attack.
The premium protocol observes the base layer's state,
and as long as premium distribution completes successfully,
it does not otherwise affect the base protocol's control flow or timeouts.
If the premium protocol fails,
then so does the base protocol,
although parties may need to execute truncated versions
of the base protocol to recover their premiums.
The advantage of factoring protocols this way is that in the normal case,
when the premium protocol succeeds,
safety and liveness of the base protocol layer are independent of the premium protocol layer.

We assume each blockchain has a native currency that can be used to
pay premiums on that chain.
For simplicity, 
we treat all premiums as if they were denominated in the same currency.
For example, if Alice is pays a premium $p$ on one chain,
but receives a premium $p$ on another chain,
we say she breaks even.
For clarity when describing protocols,
we talk of \emph{depositing} premiums,
and \emph{escrowing} assets,
even though these are essentially the same mechanism.

The premiums can be estimated using formula such as the Cox-Ross-Rubinstein option pricing model \cite{cox1979option}. If the value of Alice's escrowed asset is high enough,
her minimal acceptable lock-up compensation may exceed Bob's maximum
acceptable lock-up risk, and no premium exists.
As described in \secref{bootstrap},
this mismatch can be resolved by \emph{bootstrapping} premiums:
using smaller premiums to protect distribution of larger premiums.

\section{Two-Party Swap}
\seclabel{twoparty}
We now consider an atomic swap protocol where Alice and Bob exchange assets.
Like most atomic swap protocols in the literature,
ours is based on \emph{hashed timelock contracts} (HTLCs)~\cite{tiernolan}.
Alice generates a \emph{secret} $s$, its cryptographic hash $h = H(s)$,
and a \emph{timelock} $t$ after which the contract expires.
Alice publishes on the blockchain an HTLC initialized with $h,t$,
then escrows (transfers ownership of) her asset to that contract.
If the contract receives the matching secret $s$,
$h = H(s)$, before time $t$ has elapsed
\!\footnote{Since most blockchains cannot tell time directly,
$t$ is usually expressed in terms of changes to block height.},
then the contract irrevocably transfers ownership of the asset to Bob.
If the contract does not receive the matching secret
before time $t$ has elapsed,
then the asset is \emph{refunded} to Alice.
We refer to the asset being swapped as the \emph{principal}. 

\begin{figure}[!htp]
    \centering
    \includegraphics[width=.5\textwidth]{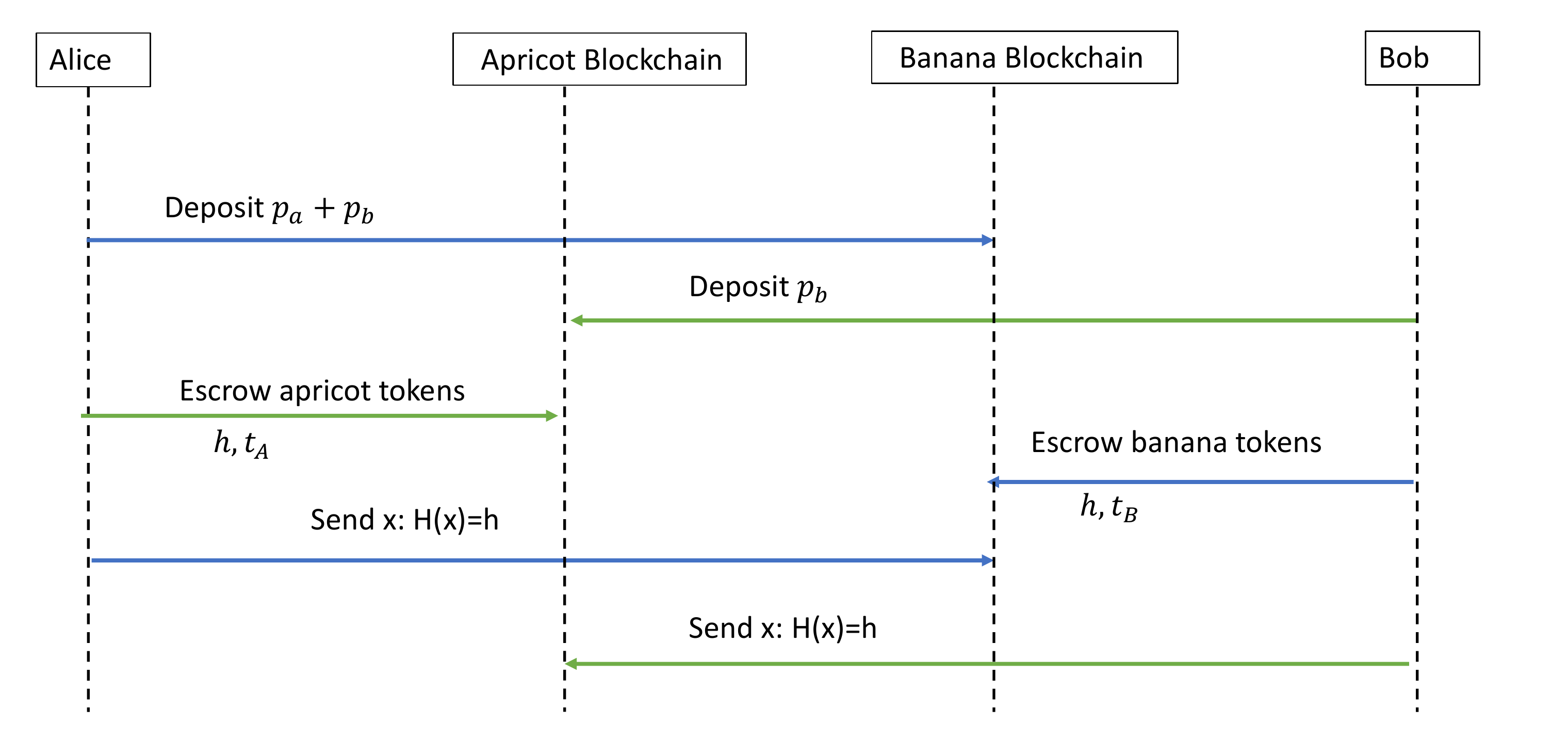}
    \caption{A Hedged Two-Party Atomic Swap Protocol}
    \figlabel{fig:onelayer}
\end{figure}

\subsection{The Base Two-Party Swap Protocol}
\seclabel{standardTwoParty}
Here is a well-known atomic swap protocol that does not protect against sore losers.
Suppose Alice wants to trade $A$ apricot tokens for one of Bob's $B$ banana
tokens.
(1) Alice generates secret $s$,
publishes an escrow contract on the apricot blockchain with hashlock $h=H(s)$
and timelock $t_A = 3\Delta$,
and escrows her apricot tokens at that contract.
(2) Within time $\Delta$,
Bob sees Alice's escrow contract on the apricot blockchain.
He publishes an escrow contract on the banana blockchain with the same hashlock $h$,
but with timelock $t_B = 2\Delta$,
and escrows his banana tokens at that contract.
(3) Within time $2\Delta$ after the start of the protocol, i.e. the time $0$,
Alice sees Bob's contract on the banana blockchain.
She sends $s$ to his contract,
acquiring Bob's principal and revealing $s$ to Bob.
(4) Within time $3\Delta$, Bob learns $s$.
He forwards $s$ to Alice's contract, acquiring Alice's principal.

This protocol guarantees that 
if both parties are compliant, the swap takes place,
that no principal is locked up forever, and that
no deviating party can steal from a compliant party.

This protocol does \emph{not} protect against sore loser attacks.
Recall that $\Delta$ is enough time for one compliant party to modify
the blockchain state (by publishing or calling a contract)
and for the other compliant party to detect that change.
To be safe, $\Delta$ should be long, say on the order of 12 hours.
If Bob walks away at Step 2, Alice's asset is locked up for $3 \Delta$,
and if Alice walks away at Step 3, Bob's asset is locked up for $\Delta$.
Bob pays no penalty for walking away.
Alice's assets remain locked up if she walks away,
but Bob gains no benefit from Alice's penalty.

\subsection{A Hedged Two-party Atomic Swap}

An atomic swap protocol should satisfy the following properties:

\begin{itemize}
    \item \emph{Liveness}. If each party is conforming, the assets are swapped and the premiums are refunded. No assets are escrowed forever.
    \item \emph{Safety}. If a compliant party transfers its asset to the counterparty, then it receives the counterparty's asset, and if it fails to receive the counterparty's asset, it does not transfer its own.
\end{itemize}

\begin{definition}
An atomic swap protocol is \emph{hedged} if, whenever a compliant party escrows assets that are not redeemed, that party receives what it considers sufficient compensation for its inability to use its escrowed assets. 
\end{definition}

Informally, by the above definition, that party's risk is limited to locking up acceptably small premiums over some bounded time, without compensation.

Suppose that Bob deposits a premium with Alice's swap contract.
What should happen to Bob's premium if he does not
unlock Alice's principal in time?
From outside, it is easy to assign blame. After Alice escrows her principal,
if Bob abandoned the swap without escrowing his principal,
then Alice is blameless, and should be awarded the premium.
If, instead, Alice abandoned the swap without revealing her secret,
then Bob is blameless, and his premium should be refunded.
Unfortunately,
the contract managing Bob's premium cannot tell the difference.
That contract resides on the apricot blockchain,
and so cannot inspect the state of Bob's escrow contract on the banana blockchain.

Here is how to solve this puzzle (See \figref{fig:onelayer}).
Say Alice's premium is $p_a$ and Bob's $p_b$.
Alice must escrow a premium of $p_a+p_b$.
If Bob reneges, his premium $p_b$ goes to Alice.
If Alice reneges, her premium $p_a+p_b$ goes to Bob,
Bob's premium $p_b$ goes to Alice,
so Bob's net compensation is $p_a$.
Alice's lock-up risk is $p_a+p_b$ until Bob's principal should be redeemed,
and Bob's risk is $p_b$ until Alice's principal should be redeemed.

For brevity,
we will often use terms like "Bob escrows his principal on the banana blockchain"
to mean
"Bob temporarily transfers ownership of his coin to an agreed-upon escrow
contract on the banana blockchain".

A contract on the apricot blockchain escrows Alice's coin and Bob's premium,
and another contract on the banana blockchain that escrows Bob's coin and Alice's premium. The timeout for the first step is $\Delta$ from the start of the protocol,
and subsequent timeouts increase by $\Delta$.

\begin{enumerate}
\item
Alice deposits her premium $p_a+p_b$ on the banana blockchain's escrow contract with timelock
$t_{A}=5\Delta $. The timeout for Alice to deposit her premium is $\Delta$. The timeout for Bob to escrow his principal $t_{b,e}=4\Delta$. 
If Bob's principal is not escrowed before $t_{b,e}$, Alice's premium is refunded. If Bob's principal is escrowed before $t_{b,e}$, then 1) if it is redeemed before $t_{A}$ elapses,
Alice's premium is refunded 2) if it is not redeemed before that timeout elapses, Alice's premium goes to Bob.
\item
Bob deposits his premium $p_b$ on the apricot blockchain's escrow contract with timelock $t_{B}=6\Delta$. The timeout for Bob to deposit his premium is $2\Delta$.
The timeout for Alice to escrow her principal $t_{a,e}=3\Delta$. 
The contract is symmetric to Bob's escrow contract. If Alice's escrowed principal is not redeemed, the premium is awarded to Alice. Otherwise, it is refunded.
\end{enumerate}
If this premium distribution phase is successful,
the parties then execute the base swap protocol,
with escrow contracts modified to transfer premiums when assets are redeemed or refunded.

It is easy to check that if Alice and Bob are both conforming,
their principals are swapped and their premiums refunded.
If Alice is the first to omit a step after Bob escrows his principal,
she will pay Bob $p_a+p_b$, and Bob will pay Alice $p_b$.
If Bob is first to deviate after Alice escrows her principal, he will pay Alice $p_b$.
Because the control flow and timeouts of the swap protocol are unaffected
by premium distribution,
the correctness of the swap phase is unaffected.

To circumvent the constraint that smart contracts on different blockchains cannot observe one another's states,
we make repeated use of the following \emph{premium passthrough} pattern.
If party $P_0$ fails to redeem escrowed asset $A_0$ on blockchain $C_0$,
then $A_0$'s escrow contract transfers premium $p$ from $P_0$ to $A_0$'s owner.
Perhaps this omission was not $P_0$'s fault because
$P_0$ was blocked by the failure of another party, $P_1$,
to redeem asset $A_1$ on a distinct blockchain $C_1$.
If $P_1$ was the source of $P_0$'s omission,
then the escrow contract for $A_1$ on $C_1$ transfers premium
$p$ from $P_1$ to $P_0$, ensuring that $P_0$ breaks even.
This passthrough pattern can be extended to sequences of arbitrary length.

\section{Bootstrapping Premiums}
\seclabel{bootstrap}
\begin{figure}[!htp]
    \centering
    \includegraphics[width=.5\textwidth]{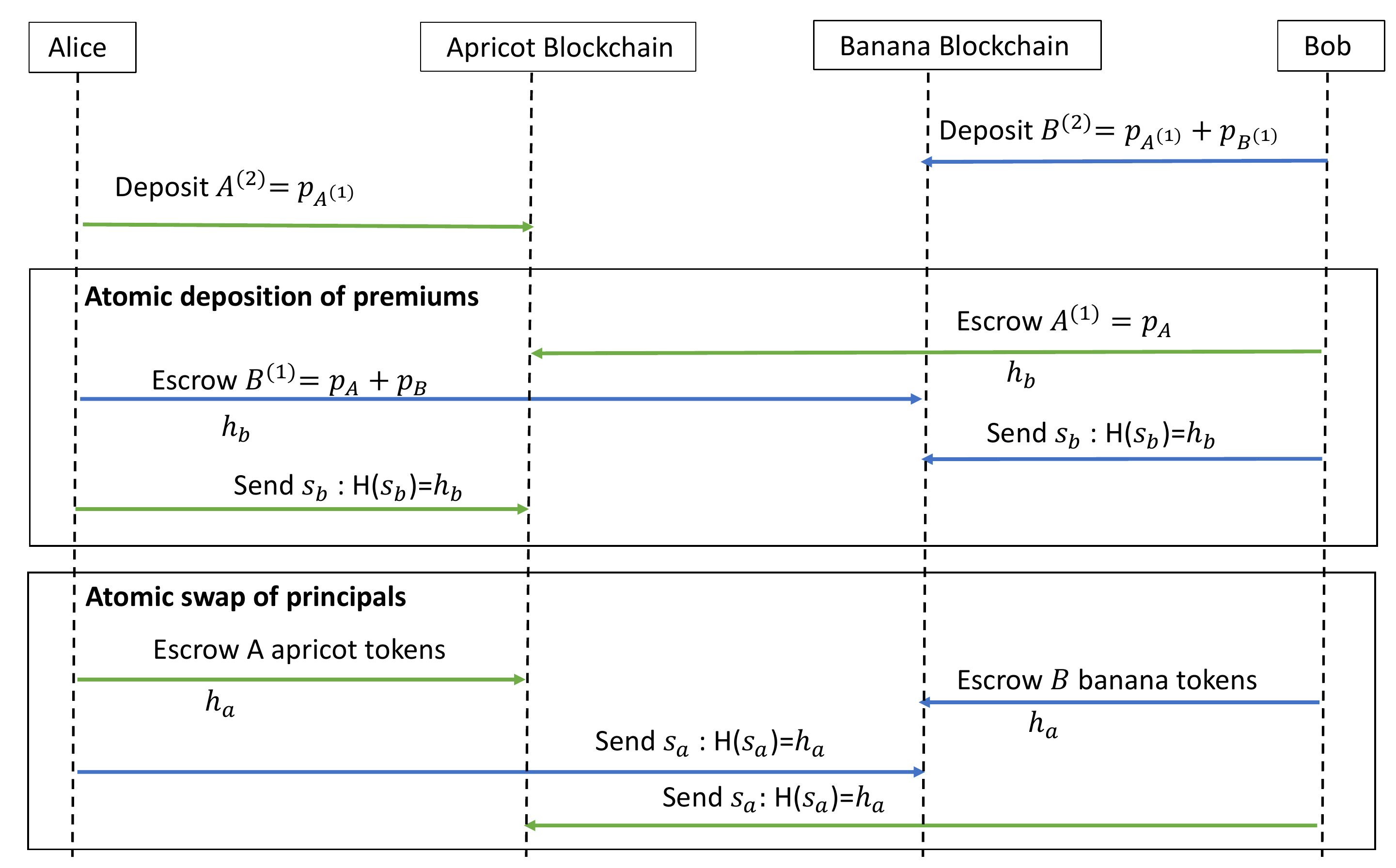}
    \caption{Hedged Two-Party Atomic Swap with 2-round Premiums}
    \figlabel{twolayer}
\end{figure}

If Alice escrows a high-value asset,
there may be no overlap between the smallest amount she will
accept as a premium and the largest amount Bob will expose to
lock-up risk.
We can reconcile this mismatch by \emph{bootstrapping} Bob's premium,
using multiple atomic swap rounds where smaller premiums are used to
protect the distribution of larger premiums.

Suppose that to protect against locking up an escrowed asset of value $v$,
Alice and Bob consider a premium  of $v/P$ for $P > 1$ to be acceptable.
In the atomic swap protocol described above,
Alice deposits a premium of value $(A+B)/P$ to Bob's escrow contract $B$,
and Bob deposits $A/P$ to  Alice's escrow contract $A$.
Alice and Bob can run a slightly modified atomic swap protocol
where instead of exchanging assets,
they deposit their premiums in their next-round escrow contracts:
Bob deposits premium $(2A+B)/P^2$ to protect the escrow of $(A+B)/P$ as Alice's next premium,
and Alice deposits $A/P^2$ to protect Bob's escrow of $A/P$.
(Since Alice's premium should be deposited first,
Bob acts as leader when they deposit premiums, see \figref{twolayer}.)
If they precede their swap with $r$ rounds of premium exchanges,
then Alice's and Bob's initial premium is $(rA+B)/P^r$ and $A/P^r$ and vice-versa, depending on who is the first leader.

Here we show a bootstrapping protocol with 2 rounds of premium deposits (\figref{twolayer}) .
We use $A^{(i)},B^{(i)}$ to denote premiums used to escrow $A^{(i-1)},B^{(i-1)}$ in the next round,
and $A^{(0)}=A$ and $B^{(0)}=B$.
In \figref{twolayer},
in the first premium deposition round,
Bob deposits $B^{(2)}$ then Alice deposits $A^{(2)}$.
In the second premium deposition round,
Alice and Bob deposit $B^{(1)}$ and $A^{(1)}$ respectively.
Bob acts as leader since he wants Alice to deposit $B^{(1)}$ to cover the next round.
Once the next round finishes,
the previous round's premiums are refunded,
except for the follower's premium in the current round.
In this example, Alice is the follower, since she will be a leader in the next round.
If she reneges, Bob has a lock-up risk of $A^{(1)}$ which exceeds Bob's acceptable risk.
Alice's $A^{(2)}$ should be refunded after Alice deposits her
principal on this escrow contract.
If Alice does not deposit her  principal,
Bob receives $A^{(2)}$ as compensation for locking up $A^{(1)}$.
Otherwise, $A^{(2)}$ is refunded and the hedged swap protocol proceeds
as in \figref{twolayer}.  

The duration of the premium lock-up risk is one atomic swap execution plus $\Delta $,
independent of the number of bootstrapping rounds.
For example, with initial premium $p$,
Alice and Bob need approximately $\log_P(\frac{A+B}{p})$ bootstrapping rounds.
With 1\% premiums and \$4 initial lock-up risk,
3 bootstrapping rounds are enough to hedge a \$1,000,000 swap.

\section{Multi-Party Swap}
\seclabel{multiparty}
\begin{figure}[!htp]
\begin{subfigure}[b]{.35\textwidth}
    \centering
    \includegraphics[width=.8\linewidth]{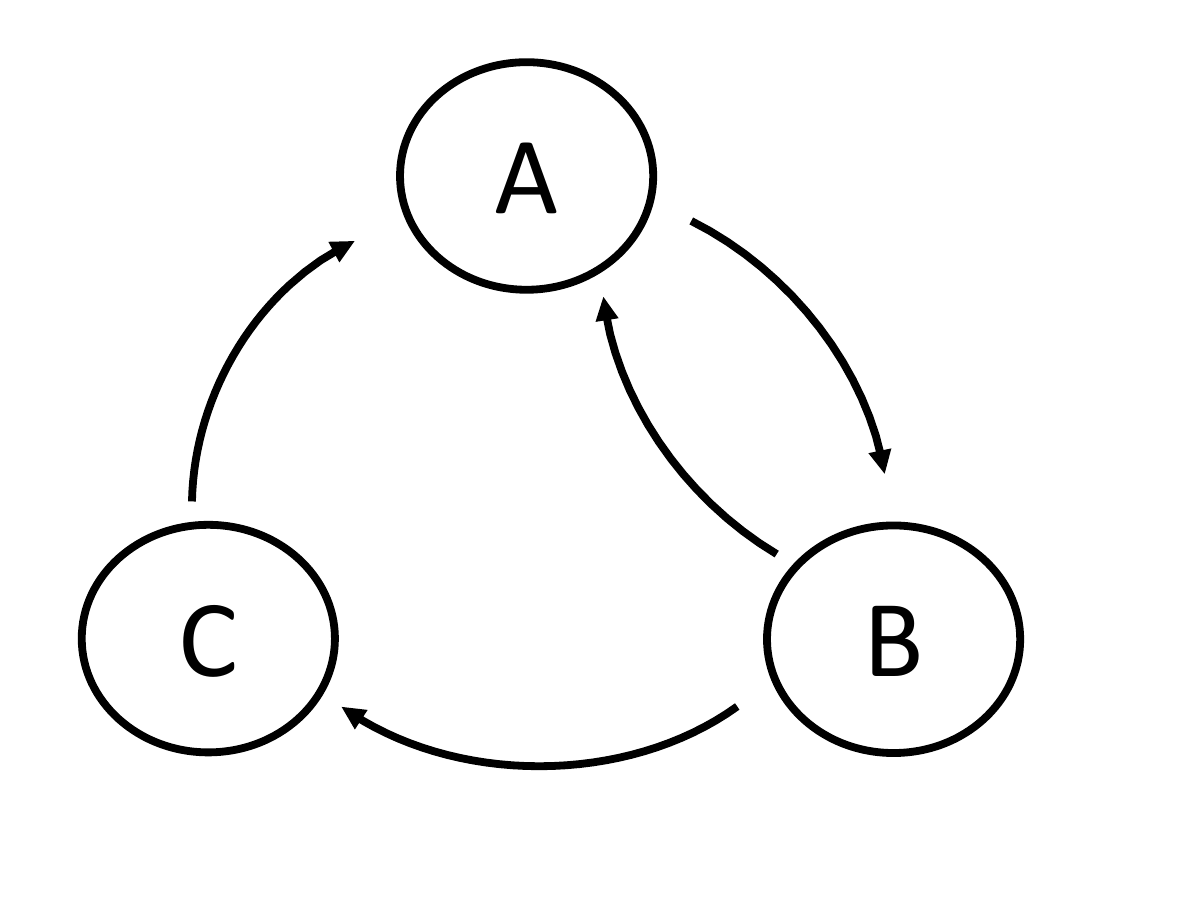}
    \caption{Multi-party Atomic Swap}
    \figlabel{multiparty}
\end{subfigure}
\begin{subfigure}[b]{.55\textwidth}
  \centering
    \includegraphics[width=.8\linewidth]{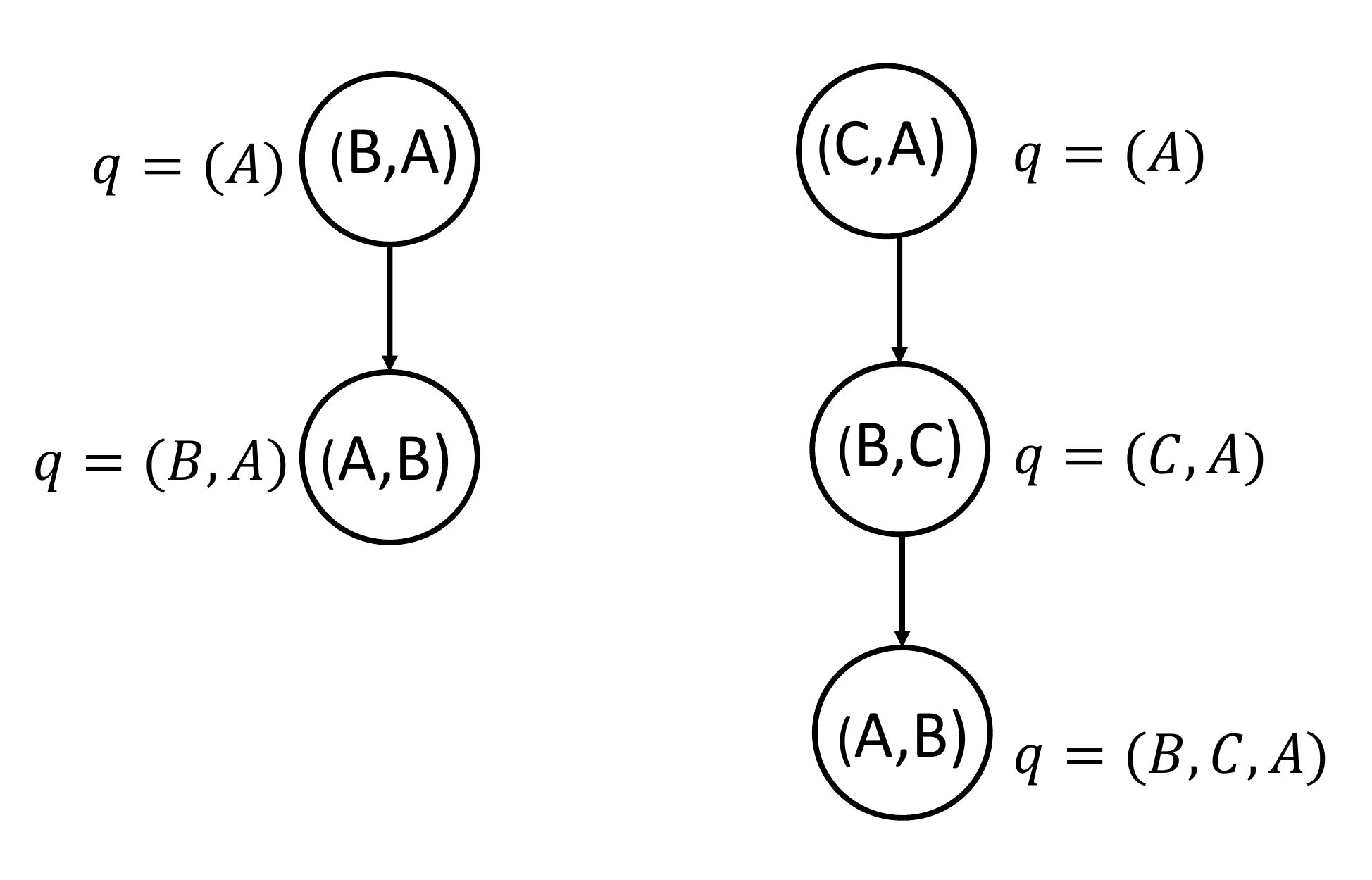}
    \caption{Paths for hashkey $k_A$}
    \figlabel{fig:pathtree}
\end{subfigure}
\caption{Multi-party Swap Digraphs}
\end{figure}

Although two-party atomic swaps are the most common in practice,
there are still situations where multiple parties want to swap assets.
A multi-party swap is represented as a strongly-connected \emph{directed graph}
(``digraph'') $\cG$ where each vertex is a party,
and each arc is a proposed asset transfer.
\figref{multiparty} shows one such swap configuration.
Henceforth, we use \emph{party} and \emph{vertex},
\emph{escrow contract} and \emph{arc}, interchangeably,
depending on whether we emphasize roles or digraph structure.

Let $\cG$ denote the swap digraph.
We say $(u,v) \in \cG$ to mean $(u,v)$ an arc of $\cG$
and similarly for vertices $v \in \cG$.
A \emph{path} $q$ in $\cG$ is a sequence of vertices
$(u_0, \ldots, u_k)$ such that each $(u_{i+1},u_i)$ is an arc of $\cG$,
and the $u_i$ are distinct.
If $u_0 = u_k$, we say $q$ is a \emph{cycle}.
\emph{Concatenation} is defined as 
$v || (u_0, \ldots, u_k) = (v,u_0, \ldots, u_k)$.

The base protocol is adapted from the multi-party swap protocol of
Herlihy~\cite{Herlihy2018}, summarized here for completeness.
See the original~\cite{Herlihy2018} for details and proofs.
Some vertices are designated as leaders, the rest as followers.
The leaders must form a feedback vertex set \footnote{A feedback vertex set is a subset of vertices whose deletion leaves
$\cG$ acyclic.} in the digraph. 
Each leader $L_i$, for $i \in 1..\ell$,
generates a secret $s_i$ and hashlock value $h_i = H(s_i)$,
yielding a \emph{hashlock vector} $(h_1, \ldots, h_\ell)$,
which is sent to each arc.
A \emph{hashkey} $k_i$ for $h_i$ on arc $(u,v)$ is a triple $(s_i,q,\sigma)$,
where $s_i$ is the secret $h_i = H(s_i)$,
$q$ is a path $(u_0,\ldots,u_i)$ in $\cG$
where $u_0 = v$ and $u_i$ is the leader who generated $s_i$,
and $\sigma$ is a sequence of signatures that authenticates the path $\sigma = sig(\cdots sig(s_i,u_i),\cdots,u_0)$. \figref{fig:pathtree} shows how paths are collected on each arc in $\cG$ of \figref{multiparty} where Alice is the only leader generating $s_a$. The nodes represents arcs.
A hashkey $(s_i, q, \sigma)$ \emph{times out} at time
$(\diam(\cG) + |q|) \cdot \Delta$ after the start of the protocol.
A hashkey no longer unlocks its hashlock after it times out.
That hashkey $(s_i, q, \sigma)$ \emph{unlocks} the hashlock $h_i$ on
$(u,v)$ if it is presented before it times out.

The base protocol has two phases.
In Phase One, each leader
(1) escrows an asset on every arc leaving that vertex, then
(2) waits until assets have been escrowed on all arcs entering that vertex.
Each follower
(1) waits until assets have been escrowed on all arcs
entering that vertex, then
(2) escrows an asset on every arc leaving that vertex.
In Phase Two, 
each leader whose incoming arcs have the expected escrowed assets
sends its hashkey to those arcs.
Each party who learns a hashkey from an incoming arc
extends that hashkey's path and propagates the extended hashkey on its outgoing arcs.
When an arc has collected all hashkeys needed,
the asset escrowed in that arc is \emph{redeemed} and transferred to the counterparty.

The base multi-party swap protocol satisfies the same safety properties as the two-party swap:
for each compliant party $v$,
(1) if $v$ transfers an asset on an outgoing arc,
then it receives all assets on incoming arcs,
and (2) if $v$ fails to receive an asset on an incoming arc,
then it transfers no assets on any outgoing arcs.

\subsection{Premium Distribution}
The two-party premium distribution protocol of \secref{twoparty}
does not easily generalize to multi-party swaps.
Consider the graph in \figref{multiparty}.
Suppose Alice posts premiums on her incoming edges $(B,A)$ and $(C,A)$.
In Phase One, (conforming) Bob escrows his assets on $(B,A)$,
but (deviating) Carol never escrows hers.
Alice has a dilemma.
If she releases her secret, Bob will take her asset,
but she will not get Carol's asset in return.
If she does not release her secret,
she will have to pay a premium to Bob.
The dilemma arises because Alice's counterparty in a two-party swap is
either compliant or deviating, 
but in a multi-party swap, her counterparties may include both.

There are two ways a deviating party can lock up its counterparties' assets.
In Phase One, a deviating party may fail to escrow its principal,
and in Phase Two, 
it may fail to deliver a hashkey needed to redeem an asset.
In response,
we define two kinds of premiums for each arc $(u,v)$:
an \emph{escrow premium} is awarded to $v$ by $u$ if the
expected asset is not escrowed on that arc in time,
and 
a \emph{redemption premium} is awarded to $u$ by $v$
if $v$ does not produce the hashkey $k_i$ in time.

Premiums are deposited in two phases that
mirror the phases of the base protocol:
first the escrow premiums are deposited, then the redemption premiums.
It is convenient to describe these protocols in reverse chronological order, redemption premiums first.
Redemption premiums flow ``backwards'' though the digraph,
starting at leaders, and moving against the orientation of the arcs.
Consider hashkey $k_i$ from leader $L_i$.
A redemption premium for arc $(u,v)$ has the form $R_i(q,u)$,
where $q$ is a path from $v$ to $L_i$ in $\cG$.
(This path reverses the order in which that premium was distributed.)
Exactly as in Phase Two of the base protocol,
this path is authenticated by signatures,
and the path length determines timeouts.

Here is the redemption premium distribution protocol for leader $L_i$.
Protocols for different leaders can be run in parallel.
Assume for simplicity that each asset has the same premium $p$.
\begin{enumerate}
\item $L_i$ deposits premium $R_i(L_i, u)$ on each incoming arc $(u,L_i)$, and
\item waits until each outgoing arc $(L_i,v)$ has a premium for $k_i$.
\end{enumerate}
Each party $v \neq L_i$,
\begin{enumerate}
\item
  waits for the first time a premium $R_i(q,v)$ for $k_i$ appears on
  some outgoing arc $(v,w)$, then
\item if $v||q$ is a path,
  then deposits premium $R_i(v||q,u)$ on every incoming arc $(u,v)$.
\end{enumerate}
Once a premium for $k_i$ has appeared on any of $u$'s outgoing arcs,
any later premiums for $k_i$ that appear on other outgoing arcs are
ignored.
The proof that the redemption premium distribution protocol
terminates is identical to the proof that the hashkey distribution
phase of the base protocol terminates,
which appears elsewhere~\cite{Herlihy2018}.
If this phase times out, the party still goes to next phase. 

How are redemption premiums calculated?
Each party $v$'s redemption premium for path $q$ is:
\begin{equation}
\eqnlabel{redeem-formula}
R_i(q,v) =
\begin{cases}
  p &\text{if } v||q \text{ is a cycle} \\
  p + \sum_{\set{u | (u,v) \in \cG}}R_i(v||q,u) &\text{otherwise}
\end{cases}
\end{equation}
This formula is well-defined because each path in $\cG$ is finite, being acyclic.
Each leader's redemption premium is
\begin{equation*}
R(L_i) = \sum_{\set{u |(u,L_i) \in \cG}} R_i(L_i,u),
\end{equation*}
the sum of the premiums on its incoming arcs.

Escrow premiums propagate ``forwards'' through the digraph,
passing from asset sender to asset receiver. Let $E(u,v)$ denote the escrow premium on arc $(u,v)$.
Each leader $L$
\begin{enumerate}
\item 
  deposits premium $E(L,v)$ on each outgoing arc $(L,v)$, and
\item
  waits until premium $E(u,L)$ has been deposited on each incoming
  arc $(u,L)$.
\end{enumerate}
Each follower $F$
\begin{enumerate}
\item 
  waits until premium $E(u,F)$ has been deposited on each incoming arc $(u,F)$, and
\item
  deposits premium $E(F,v)$ on each outgoing arc $(F,v)$.
\end{enumerate}
The proof that the escrow premium distribution protocol
terminates is identical to the proof that the escrow
phase of the base protocol terminates,
which appears elsewhere~\cite{Herlihy2018}.
If this phase times out,
the party still moves to the next phase of the protocol.  

Before an escrow premium deposited by $u$ can be awarded to $v$,
that premium must be \emph{activated}.
A premium deposited on arc $(u,v)$ is activated when
$(u,v)$ has received redemption premiums for all hashkeys $k_i$.
If $u$'s escrow premium times out before activation,
it is refunded to $u$, but after activation,
it is awarded to $v$ if the asset on $(u,v)$ is not escrowed in time.

Escrow premiums are computed by the following formula:
\begin{equation}
\eqnlabel{escrow-formula}
E(u,v) =
\begin{cases}
  R(L_i) &\text{ if } v \text{ is leader } L_i \\
  \sum_{(v,w)\in \cG}E(v,w) &\text{otherwise.}
\end{cases}
\end{equation}
The first clause states that each arc entering a leader
carries a premium equal to that leader's redemption premium.
The second clause states that each arc entering a follower
covers the premiums on arcs leaving that follower. 
The escrow premium formula is well-defined because leaders form a
feedback vertex set, so every cycle is broken by a leader.

The hedged protocol has four phases:
(1) depositing escrow premiums,
(2) depositing redemption premiums,
(3) base protocol Phase One, and
(4) base protocol Phase Two.
If the first two premium distribution phases execute successfully,
the base protocol phases execute normally,
with some additional steps to manage premiums.
If premium distribution fails,
the parties execute truncated versions of the base protocol phases
to recover their premiums.

Timeouts are determined as follows. 
Each step takes time at most $\Delta$. 
In the first phase, the leaders should escrow their outgoing escrow premiums before $\Delta$ elapses,
and each following step's timeout increases by $\Delta$.
Premiums and assets are locked until they are due to be activated, redeemed, or refunded.

In the following lemmas, $v$ is a compliant party.
\begin{lemma}
If a swap completes successfully,
then each $v$ has all its premiums refunded.
\end{lemma}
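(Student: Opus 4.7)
The plan is to enumerate every premium that a compliant party $v$ could have placed during the hedged protocol, and for each one verify that the successful completion of the swap triggers the refund clause rather than the payout clause of its escrow contract. Two families of deposits matter: escrow premiums $E(v,w)$ that $v$ places on each outgoing arc $(v,w)$ during Phase 1, and, for each leader-hashkey pair $(L_i,k_i)$ for which $v$ observes a premium appearing on some outgoing arc, the redemption premiums $R_i(v\|q,u)$ that $v$ places on its incoming arcs $(u,v)$ during Phase 2.

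For the escrow premiums, I would argue as follows. Successful completion means, by the hypotheses on the base protocol inherited from Herlihy~\cite{Herlihy2018}, that on every arc $(v,w)$ the principal is escrowed before its Phase 1 timeout elapses. By the contract rule for escrow premiums, a deposit of $E(v,w)$ is refunded to $v$ in two situations: (a) the premium never becomes activated and its own timeout elapses, or (b) it is activated but the expected principal on $(v,w)$ is escrowed in time. Case (b) always applies in a successful run, so every $E(v,w)$ returns to $v$. (The activation condition plays no adversarial role here, because activation only controls whether an untimely escrow can be \emph{punished}; a timely escrow always yields a refund.)

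For the redemption premiums, the argument is symmetric but runs on the reverse edges. Since the swap completed successfully, every asset was redeemed, which means that on each incoming arc $(u,v)$, $v$ delivered all the required hashkeys $k_i$ before their respective timeouts $(\diam(\cG)+|q|)\cdot\Delta$. This is exactly the refund condition for the redemption premium $R_i(v\|q,u)$ that $v$ deposited on $(u,v)$: the premium is awarded to $u$ only if $v$ fails to deliver $k_i$ in time, otherwise it is refunded. So each $R_i(v\|q,u)$ placed by $v$ returns to $v$.

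The main obstacle, and the place to be careful, is matching up the timeouts: I have to check that the definition of ``successful'' inherited from the base protocol (every principal escrowed in time, every hashkey delivered in time) coincides exactly with the refund triggers written into the escrow and redemption premium contracts, including the path-length-dependent deadlines used in \eqnref{redeem-formula} and the activation rule used in \eqnref{escrow-formula}. Once those correspondences are spelled out arc by arc, the lemma follows immediately by summing over outgoing and incoming arcs of $v$.
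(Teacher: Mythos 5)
Your proposal is correct and follows essentially the same route as the paper's proof: decompose $v$'s deposits into escrow premiums on outgoing arcs (refunded once $v$ escrows its asset in time) and redemption premiums on incoming arcs (refunded once $v$ delivers each hashkey $k_i$ in time). The paper's version is just a terser statement of the same two observations, without your additional discussion of activation and timeout matching.
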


\begin{proof}
For each outgoing arc $(v,w)$,
$v$'s escrow premium $E(v,w)$ is refunded as soon as $v$ escrows its
asset on that arc.
For each incoming arc $(u,v)$ and each hashkey $k_i$,
$v$'s redemption premium $R_i(q,u)$ is refunded as soon as $v$
party sends hashkey $k_i$ on that arc.
\end{proof}

What can go wrong?
\begin{lemma}
\lemmalabel{phase-four}
In Phase Four,
if hashkey $k_i$ is never revealed on any of $v$'s outgoing arcs,
then $v$ ends up with net redemption premium profit at least $p$
for each asset $v$ escrowed.
\end{lemma}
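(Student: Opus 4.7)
The plan is to compare, phase by phase, what $v$ collects from its outgoing arcs against what $v$ pays out on its incoming arcs for hashkey $k_i$, and then use the recursive structure of \eqnref{redeem-formula} to isolate a surplus of at least $p$ per escrowed asset.

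First I would fix notation. Let $W=\{w : (v,w)\in\cG \text{ and a redemption premium for } k_i \text{ was deposited on } (v,w)\}$, and write $q_w$ for the path attached to that premium. Let $\text{Esc}$ be the set of outgoing arcs on which $v$ escrowed an asset; since a compliant $v$ escrows only once the redemption premium has appeared on the arc in question, $\text{Esc}\subseteq W$. Let $(v,w^*)$ be the first outgoing arc on which $v$ observes a premium, with path $q^*=q_{w^*}$; by the protocol, $v$ deposits $R_i(v\|q^*,u)$ on each incoming arc $(u,v)$ whenever $v\|q^*$ is a path.

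Next I would translate the hypothesis into accounting. Because $k_i$ is never revealed on any outgoing arc of $v$, $v$ never learns $k_i$, and in particular cannot reveal it on any of its incoming arcs; hence every deposit $v$ placed on an incoming arc times out in favor of the upstream sender. At the same time, each $w\in W$ fails to produce $k_i$ on $(v,w)$, so the deposit $R_i(q_w,v)$ on that arc is awarded to $v$. Applying \eqnref{redeem-formula} at $q^*$,
\[
R_i(q^*,v) \;=\; p \;+\; \sum_{u:(u,v)\in\cG} R_i(v\|q^*,u),
\]
which says $v$'s total payment on its incoming arcs equals $R_i(q^*,v)-p$. The net redemption balance of $v$ is therefore
\[
\sum_{w\in W} R_i(q_w,v) \;-\; \bigl(R_i(q^*,v)-p\bigr) \;=\; p \;+\; \sum_{w\in W\setminus\{w^*\}} R_i(q_w,v),
\]
and since the same recursion gives $R_i(q_w,v)\ge p$ for every $w\in W$ (the cyclic base case is exactly $p$, and the acyclic recursive case adds a non-negative tail), the net is at least $p\cdot|W|\ge p\cdot|\text{Esc}|$, yielding profit at least $p$ per escrowed asset.

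The main obstacle I anticipate is verifying at the protocol level that compliant-party discipline really forces $\text{Esc}\subseteq W$, i.e.\ that $v$ does not escrow an asset on any outgoing arc whose redemption premium is missing, and handling the boundary case where $v\|q^*$ fails to be a path (so $v$ makes no deposit on incoming arcs and the recursion degenerates, in which case the bound still holds trivially because the entire balance is collected as profit).
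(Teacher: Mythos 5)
Your proposal is correct and follows essentially the same route as the paper's proof: both use the recursion in \eqnref{redeem-formula} to show that the premium collected on the first outgoing arc where $R_i(\cdot,v)$ appears covers $v$'s deposits on its incoming arcs plus $p$, while every other outgoing arc contributes at least $p$ more. Your ledger is in fact more explicit than the paper's (which elides the fact that $v$ pays out only once, against $q^*$, and does not spell out the $\text{Esc}\subseteq W$ containment or the degenerate case where $v\|q^*$ is not a path), so the extra care is welcome rather than a deviation.
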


\begin{proof}
Denote the redemption premium that $v$ receives on each outgoing arc $(v,w)$ as $R_i(q,v)$, where $q=(v,w,\cdots,L_i)$. If $v$'s outgoing redemption premiums on incoming arcs have the form $R_i(v||q,u)$ for all arc $(u,v)\in \cG$,
then by \eqnref{redeem-formula},
$v$ ends up with net redemption premium profit at least $p$ for any outgoing arc $(v,w)$.
If $v||q$ is a cycle,
$v$ is awarded at least $p$ for any outgoing arc $(v,w)$. Every outgoing arc contributes a redemption premium profit at least $p$.
\end{proof}

\begin{lemma}
\lemmalabel{phase-three}
In Phase Three,
if some $u$ fails to escrow an asset on $(u,v)$,
then a party $v$ ends up with a net escrow profit at least zero,
and a net redemption premium profit at least $p$
for each asset $v$ escrowed.
\end{lemma}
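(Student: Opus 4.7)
The plan is to split the claim into its escrow-premium half and its redemption-premium half, and dispatch each by a short case analysis on whether $v$ is a leader or a follower in the base protocol.

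For the escrow-premium balance I would appeal to \eqnref{escrow-formula}. Because Phase Two has completed on every arc incident to $v$, each such escrow premium is already activated, so only collection matters. If $v$ is a follower, compliance forces it to wait for every incoming asset; the failure of $u$ therefore prevents $v$ from escrowing on any outgoing arc, so each $E(v,w)$ is forfeited to $w$, but $v$ collects at least $E(u,v)=\sum_{(v,w)\in\cG}E(v,w)$ from $u$, and the flows cancel exactly. If $v$ is a leader $L$, then $v$ escrowed on all outgoing arcs at the start of Phase Three, so those outgoing premiums are refunded to $v$, and $E(u,v)=R(L)$ collected from the failing incoming arc is pure profit.

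For the redemption-premium balance, the clean observation is that a compliant follower $v$ has escrowed nothing for the same waiting reason, so the claim is vacuous; only the leader case requires work. When $v=L_i$ is a leader, the missing asset on incoming arc $(u,v)$ prevents $v$ from releasing its own secret $s_i$ in Phase Four, so $k_i$ is never revealed on any arc in the system, and in particular on no outgoing arc of $v$. Applying \lemmaref{phase-four} with this $i$ then yields net redemption-premium profit of at least $p$ for each asset $v$ escrowed.

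The step I expect to be trickiest is the escrow-side accounting for followers, because several incoming arcs may fail simultaneously and coexist with refunds on others. The right way to handle this is to invoke the identity $E(u,v)=\sum_{(v,w)\in\cG}E(v,w)$ only after observing that at least one failing incoming premium is collected and every outgoing one is forfeited; that single identity then does all the work, and the leader-case argument for redemption profit reduces directly to the already-established \lemmaref{phase-four}.
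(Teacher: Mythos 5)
Your overall architecture matches the paper's: split the ledger into escrow and redemption premiums, case-split on leader versus follower, use \eqnref{escrow-formula} to balance the escrow side, and reduce the redemption side to \lemmaref{phase-four} via the observation that the leader withholds $s_i$ so $k_i$ is never revealed. The follower analysis and the observation that the redemption claim is vacuous for a follower (who escrows nothing) are fine and consistent with what the paper leaves implicit.

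There is, however, one accounting item you drop in the leader case. In Phase Two the leader $L_i$ deposited redemption premiums $R_i(L_i,u'')$ on \emph{every} incoming arc $(u'',L_i)$, totalling $R(L_i)$. When $L_i$ withholds $s_i$, it cannot redeem the assets that compliant parties \emph{did} escrow on those incoming arcs, so those deposits are forfeited to the escrowers. This is a redemption-side loss of up to $R(L_i)$ that your ledger never records: you book the collected escrow premium $E(u,v)=R(L_i)$ as ``pure profit'' on the escrow side and then rely on \lemmaref{phase-four} to deliver $p$ per escrowed asset on the redemption side. But the proof of \lemmaref{phase-four} in the leader (cycle) case only establishes what the leader \emph{receives} on its outgoing arcs (namely $p$ each, since $v\|q$ is a cycle); it does not net out the leader's own incoming-arc deposits. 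The paper closes this hole explicitly: ``the premium $E(u,v)$ is enough to cover the cost of paying the redemption premiums on $v$'s incoming arcs,'' i.e.\ the collected escrow premium $E(u,L_i)=R(L_i)=\sum_{u''}R_i(L_i,u'')$ is earmarked precisely to absorb the forfeited deposits, leaving the $p$ per escrowed outgoing asset as genuine surplus. Your bottom line survives because the escrow-side surplus happens to equal the redemption-side shortfall, but as written your two sub-claims (escrow profit $\geq 0$ with $E(u,v)$ as pure gain, and redemption profit $\geq p$ per asset standing on its own) are not both correct independently; you need to make the same cross-subsidy the paper does.
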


\begin{proof}
If $u$ fails to escrow its asset,
then $v$ collects the escrow premium on $(u,v)$.
By \eqnref{escrow-formula},  if $v$ is a follower, $v$ does not escrow any outgoing assets, and the premium $E(u,v)$ is enough to cover the cost of paying the escrow premiums on $v$'s outgoing arcs,  yielding the net escrow premium profit at least zero.
If $v$ is a leader, since the leader $v$ escrows assets on the outgoing arcs whose escrow premiums are activated, $v$ does not need to pay escrow premiums to anyone, yielding the net escrow premium profit at least zero. The leader $v$ then proceeds to Phase Four without revealing $k_v$, the premium $E(u,v)$ is enough to cover the cost of paying the redemption premiums on $v$'s incoming arcs. For any arc $(v,w)$ that $v$ has escrowed asset, since the escrow premium $E(v,w)$ is activated,  and $k_v$ on $(v,w)$ cannot be revealed, yielding per-asset net redemption premium profit at least $p$ by \lemmaref{phase-four}.
\end{proof}

\begin{lemma}
\lemmalabel{phase-two}
In Phase Two,
if no redemption premium for $k_i$ is deposited on any outgoing arc $(v,w)$,
then $v$ ends up with a net escrow premium profit at least zero,
and a net redemption premium profit at least zero.
\end{lemma}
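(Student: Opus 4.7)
The plan is to argue that when Phase~Two breaks down for hashkey $k_i$ at $v$, the breakdown propagates, through the activation rule, to every arc incident to $v$, so no premium flow against $v$ can ever be triggered. I would proceed in three short steps.

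First, by the Phase-Two depositing rule, a compliant $v$ deposits the premium $R_i(v||q,u)$ on an incoming arc $(u,v)$ only after observing some $R_i(q,v)$ on one of its outgoing arcs. By hypothesis no such premium ever appears on any outgoing arc of $v$, so $v$ makes no $R_i$ deposit at all. Consequently every arc incident to $v$, whether incoming or outgoing, is missing the redemption premium for $k_i$.

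Second, for the net escrow premium profit, I apply the activation rule from Section~\ref{sec:multiparty}: an escrow premium on arc $(u,v)$ becomes activated only when that arc has collected a redemption premium for every hashkey. Since $R_i$ is absent on every arc incident to $v$, no such escrow premium is activated. Each $E(v,w)$ that $v$ deposited in Phase~One therefore times out unactivated and is refunded to $v$, while each $E(u,v)$ deposited by a neighbor $u$ is refunded to $u$ with no flow to $v$. The net escrow premium flow for $v$ is exactly zero, which is at least zero.

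Third, for the net redemption premium profit, the contribution from $k_i$ is zero because $v$ neither deposited nor received anything for $k_i$. For every other hashkey $k_j$ with $j \ne i$ the analysis splits: either no $R_j$ ever appears on $v$'s outgoing arcs, in which case the same protocol rule gives a zero contribution for $k_j$; or $R_j$ does appear, in which case $v$ makes the prescribed deposits on incoming arcs. The main obstacle is showing these $R_j$ deposits are not forfeited. I would close this gap by observing that, because premium distribution has already failed and no escrow premium incident to $v$ is activated, the hedged protocol executes its truncated base variant; no compliant party has any incentive to escrow a principal on an arc whose escrow premium is unactivated, so no asset is escrowed on $(u,v)$, and the refund clause on $v$'s $R_j$ deposit, read as returning the premium whenever the corresponding asset is never escrowed by the deadline, releases the deposit back to $v$. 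Summing the $k_i$ zero contribution with the non-negative contributions from each $k_j$ yields net redemption premium profit at least zero, completing the lemma.
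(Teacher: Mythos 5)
Your first step contains a genuine gap: you assert that because no $R_i$ premium ever appears on an outgoing arc of $v$, ``$v$ makes no $R_i$ deposit at all,'' and hence that every arc incident to $v$ is missing the $k_i$ redemption premium. That is true only when $v$ is a follower with respect to $k_i$. If $v = L_i$, the redemption-premium protocol has the leader deposit $R_i(L_i,u)$ on every incoming arc $(u,L_i)$ \emph{first}, and only then wait for premiums to appear on its outgoing arcs. So under the lemma's hypothesis the leader may already have money at risk on its incoming arcs, and your argument gives no reason why those deposits are not forfeited. The paper's proof splits exactly on this case: for $v \neq L_i$ it argues as you do that no $k_i$ deposit is ever made; for $v = L_i$ it observes that, since no escrow premium on $L_i$'s outgoing arcs is activated, $L_i$ proceeds to Phase Three escrowing nothing, and---because $L_i$ generated $s_i$---it can always release $k_i$ on its incoming arcs before the deadline, so its deposited $R_i(L_i,u)$ premiums are refunded. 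Without this case your proof does not cover the one party for whom the hypothesis and a prior deposit can coexist. (Note also that for the leader your claim that the $k_i$ premium is missing on \emph{incoming} arcs fails, though this does not damage the escrow-premium half of your argument, which only needs non-activation on $v$'s \emph{outgoing} arcs, exactly as in the paper.)

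Your third step also overreaches relative to what the lemma needs and then fills the resulting hole with reasoning the model does not support. The paper accounts per hashkey: its conclusion is a net \emph{$k_i$} redemption premium profit of at least zero, and other hashkeys $k_j$ are handled by applying the same lemmas to them. Your attempt to show that $R_j$ deposits for $j \neq i$ are not forfeited leans on ``no compliant party has any incentive to escrow'' and on a refund clause ``read as returning the premium whenever the corresponding asset is never escrowed''---neither is a rule stated in the protocol; compliant parties follow the prescribed steps, not incentives, and the stated refund condition for a redemption premium on $(u,v)$ is that $v$ produces $k_j$ in time. If you restrict the redemption-premium accounting to $k_i$ and add the leader case, the proof closes along the paper's lines.
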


\begin{proof}
If no redemption premium for $k_i$ is deposited on any outgoing arc $(v,w)$,
then $v$'s escrow premiums on those arcs are not activated,
and they are all refunded to $v$,
for a net escrow premium profit at least zero. If $v \neq L_i$,
then $v$ does not deposit any redemption premiums for $k_i$ on any incoming arc,
for a net $k_i$ redemption premium profit at least zero. If $v=L_i$, $v$ proceeds to Phase Three  without escrowing any assets since no escrow premium is activated. $v$ just releases $k_i$ on its incoming arcs and 
gets a net $k_i$ redemption premium profit at least zero since $v$ does not pay any premium.
\end{proof}

\begin{lemma}
\lemmalabel{phase-one}
In Phase One,
if some $u$ fails to deposit an escrow premium on $(u,v)$,
then $v$ ends up with a net escrow premium profit at least zero.
\end{lemma}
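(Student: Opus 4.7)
My plan is a case analysis on whether $v$ is a follower or a leader, tracking the escrow-premium debits and credits $v$ incurs. The two roles behave quite differently in Phase One: followers wait before spending, while leaders spend first and then wait.

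Suppose first that $v$ is a follower. The escrow premium distribution protocol requires a follower to wait until escrow premiums have been deposited on every incoming arc before depositing on any outgoing arc. Since $u$ omits the premium on $(u,v)$, compliant $v$ never reaches the outgoing step, so $v$'s outgoing escrow-premium cost is exactly zero. Any premiums that other compliant predecessors did deposit on arcs incoming to $v$ either time out and refund to their senders or are activated and awarded to $v$; either way they contribute a nonnegative amount to $v$'s balance. Hence the net escrow-premium profit is at least zero in this case.

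Suppose instead that $v$ is a leader. By Step~1 of the leader's rule, $v$ has already deposited $E(v,w)$ on every outgoing arc $(v,w)$ before $u$'s omission becomes visible. For each such arc, the deposit is refunded to $v$ unless the arc is activated (i.e.\ all redemption premiums for every $k_i$ arrive on it) \emph{and} $v$ fails to escrow the principal in Phase Three. If the arc never activates, $E(v,w)$ is refunded by timeout. If it does activate, the compliant $v$ escrows the principal on $(v,w)$ in Phase Three and, since the incoming arc $(u,v)$ lacks its escrow premium, withholds $k_v$ in Phase Four; by the base HTLC's timelock, both the principal and the premium $E(v,w)$ are eventually refunded to $v$. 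In both sub-scenarios $v$ pays nothing net on outgoing arcs, and as before any incoming credits are nonnegative.

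The main obstacle is the leader sub-case: one must carefully specify the compliant leader's behavior in the ``truncated'' continuation after a Phase-One failure upstream, and verify that this behavior yields a nonnegative escrow-premium balance \emph{regardless} of how the other parties act in Phases Two through Four. \eqnref{escrow-formula} supplies the accounting identity relating outgoing and incoming premium totals at each vertex, and \lemmaref{phase-four} is invoked to confirm that withholding $k_v$ on activated outgoing arcs never produces a deficit on $v$'s escrow-premium ledger.
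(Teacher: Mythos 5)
Your follower case matches the paper's. The leader case, however, misses the mechanism the paper actually relies on, and the alternative you substitute changes the compliant party's behavior in a way that undermines the accounting. The paper's prescribed truncated behavior for a leader $v$ that detects a missing incoming escrow premium in Phase One is to proceed to Phase Two \emph{without depositing its own redemption premium} $R_v(v,u)$ on its incoming arcs. Since a redemption premium for $k_v$ can only originate from $v$ (it propagates backwards from $v$'s incoming arcs, authenticated by a signature chain rooted at $v$), no arc in the digraph --- in particular no outgoing arc $(v,w)$ --- ever receives a redemption premium for $k_v$. Activation requires redemption premiums for \emph{all} hashkeys, so none of $v$'s outgoing escrow premiums can ever be activated, and they are all refunded by timeout. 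Your ``if it does activate'' branch is therefore vacuous under the protocol's compliant behavior, and recognizing why it is vacuous is precisely the content of the leader case.

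The substitute behavior you propose in that branch --- $v$ escrows its principal on activated outgoing arcs and withholds $k_v$ in Phase Four --- is the recovery behavior the paper prescribes for a Phase \emph{Three} failure (\lemmaref{phase-three}), not a Phase One failure. Adopting it here commits $v$ to locking up its principal for the duration of the protocol, a cost that the statement's ``net escrow premium profit at least zero'' does not measure and that would then have to be compensated through \lemmaref{phase-four}, whose hypotheses (that the redemption premium structure for $k_v$ on $v$'s outgoing arcs is in place) you have not established in this scenario. The whole point of the Phase One recovery is that the leader can short-circuit everything downstream, at zero cost and with no asset ever escrowed, simply by declining to seed the $k_v$ redemption premiums. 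You should replace the activation branch with that observation.
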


\begin{proof}
If $v$ is a follower, $v$ does not deposit any escrow premiums since $v$ does not receive all incoming escrow premiums. If $v$ is a leader, $v$ proceeds to Phase Two without depositing the redemption premium $R_v(v,w)$ for any $(w,v)\in \cG$ and its outgoing escrow premiums are refunded eventually.
\end{proof}

\begin{lemma}
The multi-party swap protocol is hedged.
\end{lemma}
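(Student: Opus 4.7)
The plan is to prove the lemma by a straightforward case analysis on the earliest phase at which some deviation occurs that prevents a compliant party $v$'s escrowed assets from being redeemed. The four preceding lemmas are tailored to give exactly the per-phase guarantees needed, so the argument is essentially an aggregation: for every pattern of deviating behavior, at least one of \lemmaref{phase-one}--\lemmaref{phase-four} applies and yields a net profit of at least $p$ per escrowed asset that $v$ fails to redeem, plus a non-negative net escrow-premium balance.

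First I would observe that if all parties are compliant, then by the liveness of the base protocol all assets are redeemed, so the hypothesis of the hedged property is vacuous; in this case all premiums are refunded (as noted in the protocol description). Otherwise, fix a compliant $v$ with at least one escrowed asset that is not redeemed, and let $i \in \{1,2,3,4\}$ be the smallest index of a phase in which some party deviates in a way affecting $v$. If $i=1$, then by \lemmaref{phase-one}, $v$ is a follower that never escrows a principal (so the hypothesis of the lemma is vacuous for $v$), or $v$ is a leader who stops before releasing any redemption premiums, again escrowing nothing in Phase Three; either way $v$ has no unredeemed escrowed assets and breaks even on premiums. The case $i=2$ is analogous via \lemmaref{phase-two}: the escrow premiums on $v$'s outgoing arcs never activate, so $v$ never escrows a principal in Phase Three, and no net premium is lost.

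The interesting cases are $i=3$ and $i=4$, where $v$ may actually have escrowed assets. If $i=3$, some $u$ fails to escrow on $(u,v)$; \lemmaref{phase-three} then gives $v$ a non-negative net escrow-premium balance and, if $v$ is a leader who consequently withholds $k_v$, at least $p$ of net redemption-premium profit per asset it has escrowed on any outgoing arc. If $i=4$, then by \lemmaref{phase-four}, for each hashkey $k_i$ that is never revealed on $v$'s outgoing arcs, $v$ nets at least $p$ per outgoing escrowed asset in redemption premiums, while escrow premiums deposited by $v$ are refunded as soon as $v$ escrowed (so the escrow balance is non-negative). Combining these, whenever a compliant $v$ has an escrowed asset that is not redeemed, $v$ receives at least premium $p$ in compensation for it, which by definition of the premium $p$ is acceptable to $v$ as compensation for the lock-up. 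This meets the hedged definition.

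The main obstacle I anticipate is bookkeeping rather than conceptual: ensuring that the escrow-premium and redemption-premium accounts are disjoint and additive, so that the guarantees of the individual lemmas combine without double-counting, and that the ``first deviating phase'' reduction actually covers the situation where deviations occur in multiple phases simultaneously. This is handled by the monotone structure of the protocol, since a deviation in an earlier phase prevents later phases from executing in a way that could harm $v$; formally I would argue that any later-phase deviation observed by $v$ is preceded by an earlier-phase event already covered by the corresponding \lemmaref{phase-one}--\lemmaref{phase-four}, so $v$'s compensation is bounded below by the guarantee of the earliest applicable lemma.
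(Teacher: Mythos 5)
Your proposal is correct and follows essentially the same route as the paper: the paper's own proof is a one-sentence aggregation stating that Lemmas~\nakedlemmaref{phase-four}--\nakedlemmaref{phase-one} together cover every situation in which a compliant party escrows an asset, yielding at least $p$ per escrowed asset. Your case analysis on the earliest deviating phase simply makes explicit the reduction the paper leaves implicit, so there is no substantive difference in approach.
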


\begin{proof}
\lemmarefrange{phase-four}{phase-one} imply that
in every situation where a compliant party escrows an asset,
it ends up with a premium profit of at least $p$ for that asset.
\end{proof}

A leader deposits a premium proportional to the number of paths in the digraph.
If there is a unique path between any two parties,
then each leader's premium is linear in $n$,
the number of digraph vertices.
In the worst case, for a complete digraph,
each leader's premium is exponential in $n$.
This premium can be reduced to linear by preceding the protocol with
$O(\log n)$ rounds of premium bootstrapping as described in \secref{bootstrap}.

\section{Brokered Commerce}
\seclabel{broker}
Not all cross-chain commerce can be expressed as swaps.
Consider the following scenario from Herlihy, Liskov, and Shrira~\cite{HerlihyLS2019}.
Alice is a ticket broker who buys tickets at wholesale prices from event organizers
and resells them at a small markup to consumers.
Alice discovers that Bob wants to sell some tickets for 100 coins,
and Carol is willing to buy them for 101 coins,
so Alice wants to broker the deal.
This three-way exchange is not a swap,
because Alice does not own either the tickets or the coins:
she is using Carol's coins to buy Bob's tickets.
Coins and tickets live on distinct blockchains.

\subsection{Base Protocol}
\begin{figure}[!htp]
\begin{subfigure}[b]{.35\textwidth}
    \centering
    \includegraphics[width=.8\linewidth]{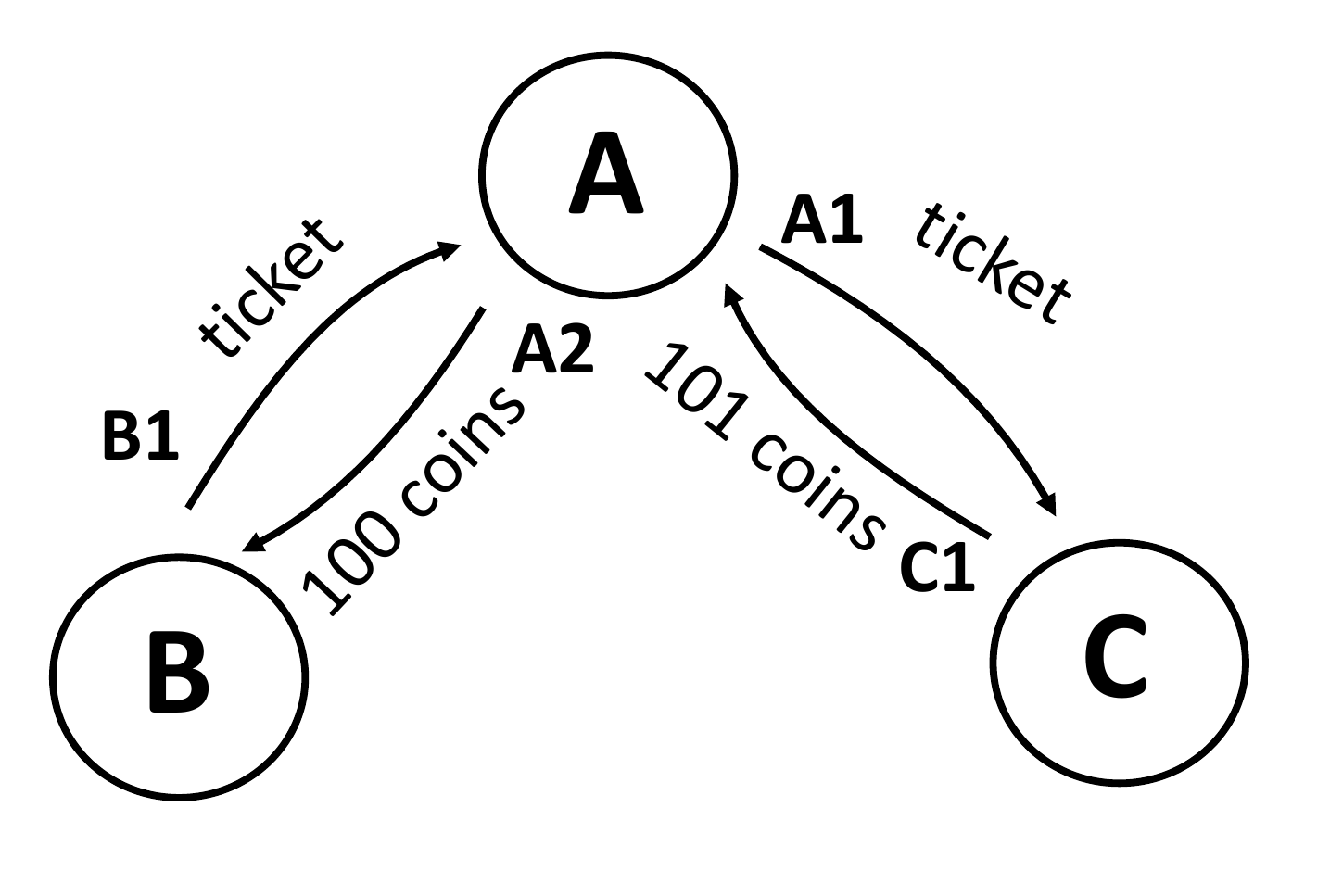}
    \caption{Broker Protocol Digraph}
\end{subfigure}
\begin{subfigure}[b]{.5\textwidth}
    \centering
    \includegraphics[width=\linewidth]{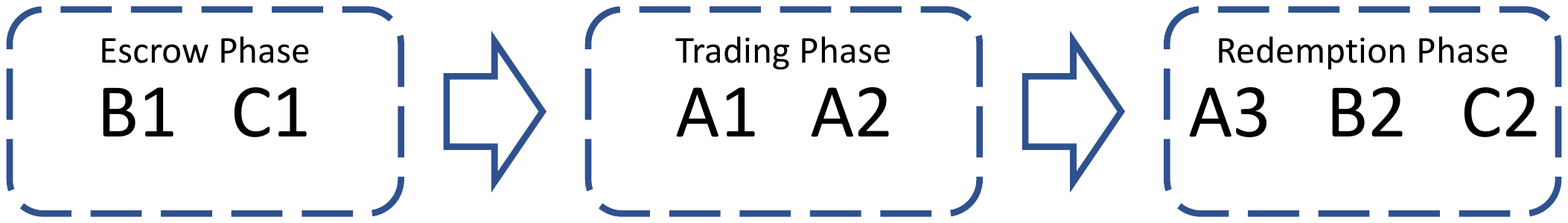}
    \caption{Broker Protocol Phases}
\end{subfigure}
    \caption{Broker Protocol}
    \figlabel{broker}
\end{figure}
The broker protocol summarized here is taken from
Herlihy, Liskov, and Shrira~\cite{HerlihyLS2019},
which includes a more complete analysis.
In the terminology of the multi-party swap,
every party is a leader.
Each party $X \in \set{A,B,C}$,
generates a secret $s_X$ and hashlock value $h_X = H(s_X)$,
yielding a \emph{hashlock vector} $(h_A, h_B, h_C)$,
which is sent to each arc.
A \emph{hashkey} $k_X$ for $h_X$ on arc $(u,v)$ is a triple $(s_X,q,\sigma)$,
where $s_X$ is the secret $h_X = H(s_X)$,
$q$ is a path $(u_0,\ldots,u_k)$ in $\cG$
where $u_0 = v$ and $u_k$ is the party who generated $s_X$,
and $\sigma$ is a sequence of signatures that authenticates the path.
An asset is redeemed when its arc has received all three hashkeys in time.
As in the multi-party swap protocol,
a hashkey $(s_i, q, \sigma)$ \emph{times out} at time
$(\diam(\cG) + |q|) \cdot \Delta$ after the start of the protocol.

Here are the steps of the base protocol.
\begin{enumerate}
  \item Escrow phase:
    \textbf{B1}: Bob escrows his tickets on arc $(B,A)$, and
    \textbf{C1}: Carol escrows 101 coins on arc $(C,A)$.
  \item Trading phase:
    \textbf{A1}: Alice transfers the tickets to Carol on $(A,C)$,
    \textbf{A2}: Alice transfers 100 coins to Bob on $(A,B)$.
  \item Redemption phase:
    \textbf{A3}: Alice releases her hashkey on $(B,A),(C,A)$
    \textbf{B2}: Bob releases his hashkey on $(A,B)$, and
    \textbf{C2}: Carol releases her hashkey on $(A,C)$.
    When a party observes a new hashkey on an outgoing arc,
    it propagates that hashkey to all its incoming arcs.
\end{enumerate}

\subsection{Premium Structure}
Who should pay premiums to whom?
\figref{broker} shows the dependencies among these steps.
An arrow from one step, say B1, to another, A1, means that A1
cannot occur until B1 has happened.
If Bob omits B1, then Carol's coins are locked up,
and Alice cannot complete A1, forcing her to pay a premium,
so Bob pays a premium to Carol and to Alice.
If Bob completes B1 but omits B2,
then Carol's coins are locked up,
so he pays a premium to Carol.
%%Bob's premium payments take place on the ticket blockchain.
Carol's premium payments are symmetric.
%%taking place on the coin blockchain.

Alice's situation is the most interesting,
since her role would not exist in a multi-party swap.
She escrows no assets,
but she should still receive passthrough premiums to reimburse her for
premium payments forced on her by others.
If Alice omits A1 after Bob performs B1,
then she pays Carol a premium on the ticket blockchain since Carol is expecting her to transfer the ticket to her.
If she omits A2 after Carol performs C1,
then Alice pays Bob a premium on the coin blockchain.
If she omits A3 after Bob and Carol complete B1, B2, C1, and C2,
then she pays premiums to both on their respective blockchains.

Premiums are deposited in a three-phase protocol mirroring
the structure of the base protocol.
\begin{enumerate}
\item In the escrow premium deposit phase,
  Bob and Carol, the parties escrowing their own assets,
  deposit escrow premiums $E(B,A)$ and $E(C,A)$ on those outgoing arcs.
\item In the trading premium deposit phase,
  Alice, the only party conducting intermediate trades,
  deposits trading premiums $T(A,B)$ and $T(A,C)$ on those outgoing arcs.
\item In the redemption premium deposit phase,
  for each $u \in \set{A,B,C}$,
  each $v$ deposits premium $R_u(q,u)$ on each incoming arc $(u,v)$,
  where $q$ is a path from $v$ to $L$\footnote{In this specific case, there are opportunities for optimization. Since $(A,B)$ and $(C,A)$ are asset transfers on the same escrow contract, Bob can directly send his hashkey to the coin blockchain, simplifying the redemption premium deposition.  Since Alice does not need to forward Bob's hashkeys on $(C,A)$ , we do not need a redemption premium regarding the path $q=(A,B)$ for $s_b$. The ticket chain is symmetric. } .
\end{enumerate}
As in the multi-party swap protocol,
an escrow or trading premium is \emph{activated} on an arc
when all redemption premiums have been deposited on that arc.
As long as an escrow or trading premium has not been activated,
it can only be refunded.

Redemption premiums are calculated by \eqnref{redeem-formula}.
Trading premiums are defined as follows:
if $v$ transfers an asset to $w$ in the trading phase,
then $v$'s trading premium $T(v,w)$ is $R_w(w)$.
Escrow premiums are similar:
Let $T(v) = \sum_{w | (v,w) \in \cG} T(v,w)$.
then $u$'s escrow premium $E(u,v)$ is $T(v)$.

As long as all trading-phase transfers are known in advance,
we can extend this approach to encompass multiple rounds of trading.
Premiums for $r$ trading rounds are defined as follows.
If $v$ transfers an asset to $w$ in the escrow phase,
then $v$'s escrow-phase premium $E(v,w)$ is $T_1(w)$.
If $v$ transfers an asset to $w$ in trading phase $k$, $1 \leq k < r$,
then $v$'s phase-$k$ trading premium $T_k(v,w)$ is $T_{k+1}(w)$.
If $v$ transfers an asset to $w$ in trading phase $r$,
then $v$'s phase-$r$ trading premium $T_{r}(v,w)$ is $R_w(w)$.
In an $r$-round deal, assets change hands $r$ times.

\section{Auctions}
\seclabel{auction}
Consider a scenario where Alice has purchased some tickets she now
wants to auction to Bob and Carol.
What happens if we na\"ively try to adapt \secref{broker}'s broker protocol?
If Bob submits the higher bid,
but Alice dishonestly tries to take his coins without awarding him the tickets,
then Bob simply cancels the auction by withholding his final vote to
commit (his hashkey).
Bob is safe, but he is exposed to a sore loser attack:
if Carol is angry because her bid lost,
she withholds her vote to commit,
ensuring that no one gets the tickets.
A premium structure similar to the hedged broker protocol
could compensate Bob if sore loser Carol wrecks the auction,
but suppose again that a dishonest Alice tries to take Bob's money without
awarding him the tickets.
When Bob justifiably withholds his vote to commit,
he will be unfairly required to pay premiums to the others.

In this section, we propose a simple hedged auction protocol
that is not vulnerable to a sore loser attack from the low bidder,
and that compensates the bidders if the auctioneer is caught cheating.

\subsection{Base Protocol}
As \secref{broker},
there are two blockchains, the ticket chain and the coin chain.
Alice generates two secrets:
$s_B$ to be used if Bob wins, and $s_C$ if Carol wins.
Alice constructs hashkeys $k_B$ based on $s_B$, and $k_C$ based on $s_C$.
Recall that a hashkey is a triple $(s,q,\sigma)$,
where $q$ is the path the hashkey has traversed,
$s$ is a secret, and $\sigma$ the signatures authenticating the path.
The hashkey times out after time $|q|\Delta$.
Since there are only 3 parties,
the longest a hashkey can survive is $3\Delta$.
For brevity, we use $k_B$ ($k_C$) to denote any valid hashkey based on
$s_B$ ($s_C$).
The protocol has several phases, each of duration $\Delta$.
\begin{enumerate}
\item
  In the \emph{bidding phase},
  Bob and Carol send their bids\footnote{
    In a more realistic auction protocol, the bidders might use a two-round
    commit-reveal scheme to keep their bids secret from one another,
    a topic beyond this paper's scope.}
  to the coin chain contract, which records them.
  At the end of this phase, 
  the high bidder's identity is evident from inspecting the coin chain
  contract.
  No new bids are accepted after this phase.

\item
  \label{declaration-phase}
  In the \emph{declaration phase},
  Alice inspects the coin chain contract to determine the winner,
  and publishes the hashkey identifying the winner on both the coin
  and ticket chain contracts.
  (For example, if Bob wins, she publishes $k_B$.)

\item
  \label{challenge-phase}
  In the \emph{challenge phase},
  Bob and Carol inspect the hashkeys Alice published on the coin and
  ticket chain contracts, if any.
  If any hashkey appears at one contract but not the other,
  Bob and Carol forward that hashkey to the contract missing that hashkey.
  This phase takes time $3\Delta$,
  long enough for Alice's hashkeys to time out.
\item
  \label{commit-phase}
  In the \emph{commit phase}, the auction is settled.
  The coin chain contract compares the hashkeys it has received with the bids.
  If it received only the actual winner's hashkey, all is well, and
  it refunds the lower bid and transfers the higher bid to Alice.
  If it received the low bidder's hashkey, or no hashkey,
  then Alice cheated, and all bids are refunded.
  If the ticket chain contract received exactly one hashkey,
  it transfers the tickets to the matching party.
  If it received zero or two hashkeys, it refunds the tickets to Alice.
\end{enumerate}

\begin{lemma}
If a hashkey $k=(q,s,\sigma)$ is published on one contract,
then it is also published on the other.
\end{lemma}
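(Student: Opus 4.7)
The plan is to case-split on who first publishes $k$ on either contract, and then to use the synchronous model to argue that a compliant bidder has enough time to replicate $k$ on the other contract within the $3\Delta$ challenge phase.

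First, I would observe that a valid hashkey based on $s_B$ or $s_C$ can only originate from Alice, since only she knows these secrets. Hence the \emph{first} appearance of $k = (q,s,\sigma)$ on any contract is either (i) Alice publishing it on some contract $C_1$ during the declaration phase, or (ii) a bidder forwarding $k$ during the challenge phase. Case (ii) is immediate: a compliant bidder only submits $k$ to $C_1$ after observing it on the other contract $C_2$, so in that case $k$ is already on $C_2$.

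For case (i), suppose Alice first publishes $k$ on $C_1$ at some time $t \in [\Delta, 2\Delta]$. By the synchrony assumption, each bidder observes $k$ on $C_1$ by time $t + \Delta \leq 3\Delta$. A compliant bidder who notices that $k$ is missing from $C_2$ then submits $k$ to $C_2$ per the protocol, and by synchrony again $C_2$ records $k$ by time $t + 2\Delta \leq 4\Delta$, well within the challenge phase that ends at time $5\Delta$. Since the forwarded triple is byte-identical to the one Alice published, whatever validity condition $C_1$ accepted is also accepted by $C_2$, and in particular the hashkey has not yet exceeded the $|q|\Delta$ lifetime that $C_1$ already tolerated on arrival.

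The main obstacle I would flag is the implicit dependence on at least one bidder being compliant: if both Bob and Carol deviate, nothing in the protocol forces $k$ onto the second contract, and the lemma's conclusion can fail. I would therefore state \textbf{``at least one bidder is compliant''} as an explicit hypothesis, which is consistent with the threat model of the rest of the paper, where the goal is to protect honest parties rather than to constrain behavior when every counterparty deviates. A secondary subtlety to check is that the phase durations leave enough slack so that a late publication by Alice near the end of declaration still allows the forward to arrive strictly before the commit phase begins; the bound $t + 2\Delta \le 4\Delta < 5\Delta$ derived above is what gives the needed margin.
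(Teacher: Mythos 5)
Your proof has a genuine gap in the timing argument, and it stems from checking the forwarded hashkey against the wrong deadline. A hashkey $(s,q,\sigma)$ times out at $|q|\Delta$ after the start of the protocol, which for three parties is at most $3\Delta$; the challenge phase is made $3\Delta$ long precisely so that \emph{all} hashkeys have expired by the commit phase. Your bound $t+2\Delta \le 4\Delta < 5\Delta$ shows the forward lands before the challenge phase ends, but by $4\Delta$ every hashkey has already timed out and the contract will reject it. Relatedly, your claim that the forwarded triple is ``byte-identical'' to what Alice published is not how the protocol works and would make things worse: a byte-identical copy inherits the original short path $q$ and hence the original early timeout. The reason forwarding succeeds is that a compliant party \emph{extends} the path (and signature chain), pushing the timeout out to $(|q|+1)\Delta$; the paper's proof exploits exactly this, arguing that a hashkey whose path contains no compliant party has $|q|\le 2$, so it appeared before $2\Delta$, leaving the missing compliant party one $\Delta$ to publish the length-$3$ extension before its $3\Delta$ timeout.

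There is also a missing case in your case split. You assume the first publication is either by Alice or by a compliant bidder forwarding, but a deviating bidder colluding with Alice can be the first publisher: Alice hands over $\sig(s_B,A)$ off-chain, and the deviating bidder posts a validly signed hashkey with path of length $2$ on only one contract. Neither of your cases applies there. The paper avoids this by case-splitting on the path $q$ carried inside the hashkey (does it contain a compliant party or not?) rather than on the identity of the publisher, which covers arbitrary Byzantine publication patterns. Your observation that the lemma implicitly needs at least one compliant bidder is fair and worth stating; the paper's proof also relies on a ``missing compliant party'' existing, which is justified by the context in which the lemma is applied (protecting a compliant bidder).
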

\begin{proof}
  If the path $q$ includes a compliant party,
  then that party has already published $k$ on the other contract.
  If path $q$ does not include any compliant party,
  then $q$ has length at most 2,
  implying $k$ was published before $2\Delta$ elapsed.
  The missing compliant party has time $\Delta$ to publish $k$
  on the other contract before $k$ times out.
\end{proof}  

\begin{lemma}
  No compliant bidder's bid can be stolen.
\end{lemma}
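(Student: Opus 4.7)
The plan is to fix a compliant bidder, say Bob (the argument for Carol is symmetric), and to analyze what can happen to his bid at the end of the commit phase. A bid is ``stolen'' if Bob's escrowed coins are released to someone other than Bob, yet Bob does not receive the corresponding tickets. The goal is to show that in every scenario one of two things happens: either Bob's bid is refunded in full, or Bob wins the auction and receives the tickets in exchange for his coins.

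The key tool is the preceding lemma, which guarantees that any hashkey on one chain also ends up on the other. This cross-chain consistency prevents Alice from publishing $k_B$ on one chain and $k_C$ on the other to split a ``payment'' from a ``delivery''. Given this, I would case-split on which subset of $\{k_B,k_C\}$ ends up on the coin chain contract at the end of the challenge phase. By the commit-phase rules: (i) if the coin chain holds no hashkey, or holds the low bidder's hashkey (whether or not it also holds the winner's), then all bids are refunded, so Bob's coins are returned; (ii) if the coin chain holds exactly the actual winner's hashkey, then the losing bid is refunded and the winning bid is sent to Alice. If Bob is the loser in case (ii) he is refunded; if Bob is the winner in case (ii), then by the consistency lemma the ticket chain also holds $k_B$, so the ticket chain contract transfers the tickets to Bob in exchange for his coins.

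The main obstacle is the subcase where Bob is the compliant winner and Alice is dishonest. Alice might try to collect Bob's coins while withholding the tickets, for example by publishing $k_B$ only on the coin chain, or by additionally publishing $k_C$ on the ticket chain to cause a two-hashkey refund there. The first attempt is defeated by the challenge phase: because $k_B$ arrives with a short path whose timeout has not yet expired, a compliant party has time $\Delta$ to forward $k_B$ to the ticket chain, and by the consistency lemma the ticket chain delivers the tickets to Bob. The second attempt is self-defeating, since publishing $k_C$ anywhere propagates it to the coin chain as well, placing the low bidder's hashkey there, triggering rule~(i) and refunding Bob's coins. In every remaining execution Bob either receives the tickets or is refunded, so his bid is never stolen.
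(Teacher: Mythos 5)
Your proof is correct and follows essentially the same route as the paper: a case analysis on which hashkeys appear on the contracts, combined with the preceding cross-chain consistency lemma to rule out Alice splitting the payment from the delivery (in particular to exclude the ``two hashkeys on the ticket chain'' refund when the coin chain holds only the winner's key). The paper states this more tersely by fixing Bob as the high bidder and splitting into the three cases no hashkey / $k_C$ published somewhere / only $k_B$ published, but the substance is identical.
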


\begin{proof}
  Suppose Bob is the high bidder. 
  
  If no hashkeys are published on either contract,
  then all bids are refunded at Phase~\ref{commit-phase}. 
  
  If any party publishes $k_C$ on either contract,
  then some party publishes $k_C$ on the coin contract,
  and all bids are refunded at Phase~\ref{commit-phase}. 
  
  If no party publishes $k_C$ on either contract,
  but some party publishes $k_B$ on some contract,
  then $k_B$ and only $k_B$ is published on both contracts,
  so the chain contract will refund Carol's bid and transfer Bob's bid to Alice,
  and the ticket blockchain will transfer the tickets to Bob.
\end{proof}

If Alice deviates, she can award the tickets to either bidder (or neither),
but since she owns those tickets, she could have done that without an auction.
What matters is that if Bob or Carol are compliant, their bids cannot be stolen.

\subsection{Premium Structure}
Bob and Carol do not pay premiums because they cannot maliciously lock
up anyone's assets.
(A party who withholds a bid arguably does the other party a favor.)
Alice should pay premiums,
because she can lock up Bob and Carol's coins,
either by abandoning the protocol midway or by cheating.

Alice endows her coin chain contract with $2p$ premiums.
If the bids are refunded in Phase~\ref{commit-phase},
then Bob and Carol are each awarded $p$ along with their refunded bids.
If the auction completes, Alice's premiums are refunded.
Generalizing this protocol to $n$ bidders
requires Alice to deposit $np$ premiums.

\section{Remarks and Conclusions}
\seclabel{conclusions}
We used model checking to verify the properties of the two-party
hedged swap and some three-party hedged swaps.
As discussed in \secref{threatmodel},
smart contracts severely constrain the behavior of Byzantine participants
by enforcing ordering, timing, and well-formedness restrictions on
transactions.
Byzantine parties are restricted to attacks that appear reasonable
at individual blockchains, even if they are globally incorrect.
Surprisingly, perhaps, 
this constrained behavior can be model-checked in reasonable time.
The TLA+ source code and model specifications can be found in our GitHub repository \cite{githubrepo}.

Blockchains such as Ethereum, whose smart contracts are implemented using a Turing-complete language, can support our protocols directly. For blockchains such as Bitcoin, whose contracts are more restricted, we note that Han \emph{et al.} \cite{han2019optionality} introduced a new opcode to support their premium protocol. 

 In future work, we plan to study premiums in asynchronous protocols such as those proposed by Glabbeek \emph{et al.}~\cite{van2020feasibility}, Ranchal-Pedrosa and Gramoli~\cite{ranchal2019platypus} and Herlihy \emph{et al.}~\cite{HerlihyLS2019}. 
 
 We have made no attempt to optimize the round complexity of our protocols. It would be interesting to derive lower bounds for round complexity of premium protocols.

We have studied the sore loser problem in the context of cross-blockchain financial deals, but similar issues arise in any distributed coordination protocol where a faithless party can trick another party into locking up a resource for a non-trivial duration.
The resources at risk might be disk pages, network bandwidth, database access,
and so on.
In everyday life,
there are well-developed mechanisms for sore loser protection,
such as security deposits, "earnest money", downpayments and so on,
and we hope this paper will focus the community's attention
on developing similar mechanisms for distributed computing.

% you can include the acknowledgements in the source, but `anonymous' option will hide them
\begin{acks}
This research was supported by NSF grant 1917990.
\end{acks}

% use ACM-Reference-Format for the references
\bibliographystyle{ACM-Reference-Format}
\bibliography{references}

\end{document}